\newcommand{\Z}{\mathbb{Z}}
\newcommand{\R}{\mathbb{R}}
\newcommand{\Par}{\mathcal{P}}
\newcommand{\rmv}[1]{}
\newcommand{\thedate}{\today}
\newcommand{\tpitchfork}{%
  \vbox{
    \baselineskip\z@skip
    \lineskip-.52ex
    \lineskiplimit\maxdimen
    \m@th
    \ialign{##\crcr\hidewidth\smash{$-$}\hidewidth\crcr$\pitchfork$\crcr}
  }%
}
\newtheorem{theorem}{Theorem}[section]
\newtheorem{lemma}[theorem]{Lemma}
\newtheorem{proposition}[theorem]{Proposition}
\theoremstyle{definition}
\newtheorem{definition}[theorem]{Definition}
\newtheorem{remark}[theorem]{Remark}
\newtheorem{notation}[theorem]{Notation}
\begin{document}

\hfill  \thedate

\title[Qubo model for the Closest Vector Problem]{Qubo model for the Closest Vector Problem}


\author{Eduardo Canale}
\address{Instituto de Matem\'atica y Estad{\'\i}stica\\
Universidad de la Rep{\'u}blica, Uruguay}
\email{canale@fing.edu.uy}

\author{Claudio Qureshi}
\address{Instituto de Matem\'atica y Estad{\'\i}stica\\
Universidad de la Rep{\'u}blica, Uruguay}
\email{cqureshi@fing.edu.uy}

\author{Alfredo Viola}
\address{Instituto de Computaci{\'o}n\\
Universidad de la Rep{\'u}blica, Uruguay}
\email{viola@fing.edu.uy}

\maketitle


\begin{abstract}
In this paper we consider the closest vector problem (CVP) for lattices $\Lambda \subseteq \Z^n$ given by a generator matrix $A\in \mathcal{M}_{n\times n}(\mathbb{Z})$. Let $b>0$ be the maximum of the absolute values of the entries of the matrix $A$. We prove that the CVP can be reduced in polynomial time to a quadratic unconstrained binary optimization (QUBO) problem in $O(n^2(\log(n)+\log(b)))$ binary variables, where the length of the coefficients in the corresponding quadratic form is $O(n(\log(n)+\log(b)))$.

\end{abstract}

\vspace{3mm}
{\scriptsize \textsc KEYWORDS } 
\keywords{ \footnotesize CVP, lattices, condition number, QUBO, Quantum bridge analytics}\\

{\scriptsize \textsc MATHEMATICS SUBJECT CLASSIFICATION}
\subjclass{ \footnotesize 52C17, 52C22, 94B05, 94B27, 11H31}


\section{Introduction and notation}

Quantum annealing is a method based on adiabatic quantum computing, which is used to solve certain optimization problems that can be expressed into the QUBO model (or into the Ising model). Although QUBO problems belong to the class of NP-hard problems, for some (small) special instances quantum annealing techniques have proven to be quite efficient. In this sense some authors have shown how to transform several important optimization problems into the QUBO framework. Among these problems are the number partitioning problem, the max-cut problem, the set packing problem and the Max 2-sat problem \cite{GKHD22i,GKHD22ii,KN98}. Motivated by the importance of lattice problems in some post-quantum cryptographic algorithms, we study how to reduce the closest vector problem (CVP) into a QUBO problem in such a way that the reduction and also the dimension and the size of the QUBO problem are polynomial in the parameters of the CVP.

The QUBO model is expressed by the optimization problem of minimizing a quadratic form $x^t Q x$ where $Q\in \mathcal{M}_{N}(\mathbb{Z})$ is a given integer square matrix and $x$ is a vector of binary variables (i.e. each coordinate $x_i \in \{0,1\}$ for $1\leq i \leq N$). We consider here two parameter of the QUBO: their dimension $N$ (i.e. the number of binary variables to be determined) and their size $s$, the maximum absolute value of the entries of the matrix $Q$. In the CVP we have as input an integer non-singular matrix $A \in \mathcal{M}_{n}(\Z)$ and a (column) vector $x\in \Z^n$ and we want to find a vector $z\in \Z^n$ which minimizes the value of $||x-Az||$. Note that all the relevant information is enclosed in the augmented matrix $\widetilde{A}=(A|x)$. Associated with the matrix $\widetilde{A}$ we define a new square matrix $Q^{\widetilde{A}}$ and our main result (Theorem \ref{thm:main}) associates to each solution of the QUBO problem $\widetilde{x}= \operatorname{argmin} x^t Q^{\widehat{A}}x$ a solution of the CVP problem $\widetilde{z}= \operatorname{argmin}_{z \in \Z^n}||x-Az||$. We also estimate the dimension and size of $Q^{\widehat{A}}$ (Remark \ref{Remark:main}). In the last section we list further interesting research problems related to the effective implementation of QUBO problem using quantum annealing and the correspondence given by our main theorem.

\begin{notation}
$\mathcal{M}_{n \times m}(\Z)$ denotes the set of $n\times m$ integer matrices, $\mathcal{M}_{n}(\Z):=\mathcal{M}_{n\times n}(\Z)$ and $\Z^n$ is the space of integer $n$-tuples which are identified with $\mathcal{M}_{n\times 1}(\Z)$ (by abuse of notation we write the vectors $x\in \Z^n$ as row vectors). The logarithm in this paper are always in base $2$. The distance between a vector $x\in \Z^n$ and a subset $\Lambda \subseteq \Z^n$ is $d(x,\Lambda):= \min_{\lambda \in \Lambda}d(x,\lambda)$.
\end{notation}


\section{A Qubo model for the CVP} 


We consider a matrix $A \in \mathcal{M}_{n}(\Z)$ with $\det(A)\neq 0$ and let $\Lambda$ be the lattice generated by $A$ (i.e. $\Lambda=\Z a_1 + \ldots + \Z a_n$ where $a_i\in \Z^{n}$ is the i-th column of $A$). We denote the Euclidean distance between $x,y \in \R^n$ by $d(x,y)=||x-y||$. Let $\mathcal{P}_{A}=\{Ax: x\in [0,1)^n\}$ be the fundamental parallelepiped of $\Lambda$ regarding the matrix $A$. Note that the problem of finding the closest lattice vector from a vector $x=\sum_{i=1}^n x_i a_i \in \R^n$ is equivalent to find the closest lattice vector from $\widehat{x}:=\sum_{i=1}^n (x_i-\left\lfloor x_i \right\rfloor) a_i \in \Par_{A}$. Indeed, if $\widehat{\lambda}\in \Lambda$ is such that $d(\widehat{x},\widehat{\lambda})=d(\widehat{x},\Lambda)$ and $\lambda_0 = \sum_{i=1}^n \left\lfloor x_i \right\rfloor a_i $ then $d(x,\widehat{\lambda}+\lambda_0)= d(\widehat{x},\widehat{\lambda})=d(\widehat{x},\Lambda)= d(\widehat{x}+\lambda_0,\Lambda+\lambda_0)=d(x,\Lambda)$. Thus we can assume that the vector $x$, from which we want to find the closest lattice vector, is in the fundamental parallelepiped $\Par_{A}$.


In the matrix space $\mathcal{M}_{n}(\R)$ we consider the spectral norm given by $||A||=\max_{x:||x||=1} ||Ax||$ and the condition number of an invertible matrix is given by $\kappa(A)=||A||\cdot ||A^{-1}||$. The following upper bound for $\kappa(A)$ is proved in \cite{GEJ95}.

\begin{lemma}\label{lemma:condition-number}
Let $A=(a_{ij}) \in \mathcal{M}_{n}(\R)$ and $||A||_{F}=\left(\sum_{i,j}a_{ij}^2\right)^{\frac{1}{2}}$ be the Frobenius norm of $A$, then $$\kappa(A)\leq \frac{2}{|\det(A)|}\cdot \left(\frac{||A||_{F}}{\sqrt{n}} \right)^n. $$
\end{lemma}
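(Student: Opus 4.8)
The plan is to pass to singular values. Writing $\sigma_1 \geq \sigma_2 \geq \cdots \geq \sigma_n > 0$ for the singular values of $A$ (all positive since $\det(A)\neq 0$), I would use the standard identities $||A|| = \sigma_1$, $||A^{-1}|| = 1/\sigma_n$, $||A||_F^2 = \sum_{i=1}^n \sigma_i^2$ and $|\det(A)| = \prod_{i=1}^n \sigma_i$, so that $\kappa(A) = \sigma_1/\sigma_n$. Multiplying the desired inequality through by $|\det(A)|$ and substituting these, the claim becomes $\sigma_1 \prod_{i=1}^{n-1}\sigma_i \leq 2\left(\frac{||A||_F}{\sqrt{n}}\right)^n$; squaring and setting $t_i := \sigma_i^2$ and $s := ||A||_F^2 = \sum_i t_i$, it suffices to prove the purely real-variable inequality
\[
t_1^2 \prod_{i=2}^{n-1} t_i \;\leq\; 4\left(\frac{s}{n}\right)^n
\]
for positive reals $t_1 \geq \cdots \geq t_n$.

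To establish this inequality I would first control the ``middle'' product by the arithmetic--geometric mean inequality, $\prod_{i=2}^{n-1}t_i \leq \left(\frac{1}{n-2}\sum_{i=2}^{n-1}t_i\right)^{n-2}$, and then note that $\sum_{i=2}^{n-1}t_i = s - t_1 - t_n \leq s - t_1$ because $t_n > 0$. This yields $t_1^2\prod_{i=2}^{n-1}t_i \leq t_1^2\left(\frac{s-t_1}{n-2}\right)^{n-2}$, reducing everything to a one-variable maximization of $h(u) = u^2\left(\frac{s-u}{n-2}\right)^{n-2}$ over $u \in (0,s)$. Differentiating $\log h$ gives the unique critical point $u = 2s/n$, at which $\frac{s-u}{n-2} = s/n$ and hence $h(2s/n) = \frac{4s^2}{n^2}\left(\frac{s}{n}\right)^{n-2} = 4\left(\frac{s}{n}\right)^n$; since $h$ vanishes at both endpoints this is the global maximum, and substituting back closes the estimate.

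The delicate point, and what I expect to be the real obstacle, is producing the \emph{sharp} constant $2$ rather than an exponential or dimension-dependent factor: the naive route that bounds $||A|| \leq ||A||_F$ and applies AM--GM to all of $\sigma_1,\dots,\sigma_{n-1}$ loses a factor growing like $\sqrt{n}$. The whole force of the argument lies in isolating $\sigma_1$ with multiplicity two (the factor $t_1^2$), discarding only $t_n$ rather than spreading the slack, and having the calculus optimum land exactly at $u = 2s/n$ with value $4(s/n)^n$; the stated constant is then simply $\sqrt{4}=2$. I would also record that the inequality is strict unless $t_n \to 0$, explaining why the bound is essentially never attained, and separately check the degenerate case $n = 2$, where the middle product is empty, $h(u) = u^2$ is maximized at $u = s$, and the inequality collapses to $t_1 \leq s$.
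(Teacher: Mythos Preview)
The paper does not give its own proof of this lemma: it simply quotes the bound and cites \cite{GEJ95} (Guggenheimer--Edelman--Johnson). Your argument is correct and is in fact essentially the proof in that reference---pass to singular values, reduce to $t_1^{2}\prod_{i=2}^{n-1}t_i \le 4(s/n)^n$ with $s=\sum_i t_i$, apply AM--GM to the middle block, and observe that the one-variable optimum of $u^{2}\bigl(\tfrac{s-u}{n-2}\bigr)^{n-2}$ occurs at $u=2s/n$ with value exactly $4(s/n)^n$, yielding the sharp constant~$2$. So there is nothing to compare against in the present paper; you have supplied the missing self-contained proof, and your handling of the edge case $n=2$ is also correct.
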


In order to obtain the closest lattice vector from a vector in the fundamental parallelepiped we can restrict to some special lattice vectors.\\

\begin{definition}
With the same notation as above, an $A$-{\it feasible} lattice vector is a lattice vector $\lambda \in \Lambda$ such that $d(\lambda, x)=d(\Lambda,x)$ for some $x\in \Par_{A}$. The set of $A$-feasible lattice vectors is denoted by $\mathcal{F}_A$.
\end{definition}

The next lemma brings an upper bound for the coordinates of feasible lattice vectors.

\begin{lemma}\label{lemma:bound-for-z}
If $\lambda=Az \in \mathcal{F}_{A}$ with $z\in \Z^n$ then $||z||\leq \left(\frac{\kappa(A)}{2} +1 \right) \cdot \sqrt{n}$.
\end{lemma}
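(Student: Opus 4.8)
The plan is to exploit the definition of feasibility together with the two spectral-norm inequalities $\|Av\|\le \|A\|\,\|v\|$ and $\|A^{-1}v\|\le \|A^{-1}\|\,\|v\|$, so that the desired bound on $\|z\|$ emerges from a good upper bound on the distance $d(x,\Lambda)$ for a point $x$ in the fundamental parallelepiped. First I would unwind the hypothesis: since $\lambda=Az$ is $A$-feasible, there is some $x\in \Par_{A}$ with $d(x,\lambda)=d(x,\Lambda)$, and I would write $x=Ay$ with $y\in[0,1)^n$. Because every coordinate of $y$ lies in $[0,1)$ we immediately get $\|y\|\le \sqrt{n}$.

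The \emph{key step} is a sharp upper bound on $d(x,\Lambda)$. I would round each coordinate $y_i$ to the nearest element $\epsilon_i\in\{0,1\}$, so that $|y_i-\epsilon_i|\le \tfrac12$ and $\epsilon:=(\epsilon_1,\dots,\epsilon_n)\in\Z^n$; then $A\epsilon\in\Lambda$ and $\|y-\epsilon\|\le \tfrac{\sqrt{n}}{2}$. Hence
\[
d(x,\Lambda)\le \|x-A\epsilon\|=\|A(y-\epsilon)\|\le \|A\|\cdot \tfrac{\sqrt{n}}{2}.
\]
This factor $\tfrac12$ (rather than the cruder estimate obtained by comparing $x$ to the origin $0\in\Lambda$) is exactly what produces the $\kappa(A)/2$ in the statement, so isolating it is the main point of the argument.

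Finally I would convert this distance bound into a bound on $z-y$ by applying $A^{-1}$. From $d(x,\lambda)=d(x,\Lambda)$ we have $\|A(y-z)\|=d(x,\lambda)\le \|A\|\,\tfrac{\sqrt{n}}{2}$, and therefore $\|y-z\|\le \|A^{-1}\|\,\|A(y-z)\|\le \|A^{-1}\|\,\|A\|\,\tfrac{\sqrt{n}}{2}=\kappa(A)\,\tfrac{\sqrt{n}}{2}$. The triangle inequality then gives $\|z\|\le \|z-y\|+\|y\|\le \tfrac{\kappa(A)}{2}\sqrt{n}+\sqrt{n}=\bigl(\tfrac{\kappa(A)}{2}+1\bigr)\sqrt{n}$, as claimed. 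No individual step is genuinely hard here; the only real subtlety is remembering to use the nearest-vertex rounding of $y$ instead of the trivial comparison with the origin, since that is what saves the factor of two.
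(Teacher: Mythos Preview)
Your proof is correct and follows essentially the same approach as the paper's: both write the feasible witness as $A$ times a vector in $[0,1)^n$, round that vector to the nearest $\{0,1\}^n$ to get the factor $\tfrac12$, and then combine the spectral-norm bounds for $A$ and $A^{-1}$ with the triangle inequality. The only differences are notational (the paper calls your $y$ by $\epsilon$ and your $\epsilon$ by $z'$) and in how the final triangle inequality is arranged.
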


\begin{proof}
Let $\epsilon=(\epsilon_1,\ldots, \epsilon_n) \in [0,1)^n$ such that $d(\Lambda, A \epsilon)=||\lambda - A\epsilon||$. Consider $z'=(z'_1,\ldots,z'_n)$ given by $z'_i=\lfloor \epsilon_i + \frac{1}{2} \rfloor$, $1\leq i \leq n$. We have $||z||=||A^{-1}(\lambda-A\epsilon) + \epsilon ||\leq ||A^{-1}||\cdot ||Az'-A\epsilon||+||\epsilon|| \leq \kappa(A)\cdot ||z'-\epsilon||+||\epsilon||\leq \kappa(A)\cdot \frac{\sqrt{n}}{2}+\sqrt{n}$.
\end{proof}

\begin{proposition}\label{prop:zi-bound}
Let $A=(a_{ij}) \in \mathcal{M}_{n}(\Z)$ with $n\geq 3$ and $\det(A)\neq 0$. Let $\Lambda$ be the lattice generated by the columns of $A$ and $\mathcal{F}_{A}$ its subset of $A$-feasible lattice vectors. Let $b=\max \{|a_{ij}|: 1\leq i,j\leq n\}$. If $\lambda= A z \in \mathcal{F}_{A}$ with $z=(z_1,\ldots,z_n)\in \Z^n$ then $\log(|z_i|)\leq n (\log(n)+ \log(b))$ for $1\leq i \leq n$.
\end{proposition}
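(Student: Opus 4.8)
The plan is to chain the two preceding lemmas and then absorb the leftover constants using the hypotheses $n\geq 3$ and the integrality of $A$. First I would note that for any single coordinate one trivially has $|z_i|\leq \|z\|$, so Lemma~\ref{lemma:bound-for-z} already gives $|z_i|\leq \left(\frac{\kappa(A)}{2}+1\right)\sqrt{n}$ for every feasible $\lambda=Az$. Hence it suffices to bound $\kappa(A)$ in terms of $n$ and $b$ and then check the elementary inequality $\left(\frac{\kappa(A)}{2}+1\right)\sqrt{n}\leq (nb)^n$, since taking logarithms of $|z_i|\leq (nb)^n$ gives exactly $\log(|z_i|)\leq n(\log(n)+\log(b))$.

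Next I would estimate the condition number via Lemma~\ref{lemma:condition-number}. Because $A$ is a nonsingular integer matrix we have $|\det(A)|\geq 1$, and because every entry satisfies $|a_{ij}|\leq b$ we get $\|A\|_{F}=\left(\sum_{i,j}a_{ij}^2\right)^{1/2}\leq \sqrt{n^2b^2}=nb$, so that $\left(\|A\|_F/\sqrt{n}\right)^n\leq (\sqrt{n}\,b)^n=n^{n/2}b^n$. Substituting into Lemma~\ref{lemma:condition-number} yields $\kappa(A)\leq 2\,n^{n/2}b^n$, and combining this with the bound of the previous paragraph gives $|z_i|\leq \left(n^{n/2}b^n+1\right)\sqrt{n}$.

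Finally I would tidy up the constants. Since $\det(A)\neq 0$ the matrix $A$ is nonzero, so some integer entry has absolute value at least $1$ and thus $b\geq 1$; hence $n^{n/2}b^n\geq 1$, which gives $n^{n/2}b^n+1\leq 2\,n^{n/2}b^n$ and therefore $|z_i|\leq 2\,n^{(n+1)/2}b^n$. It then remains to verify $2\,n^{(n+1)/2}b^n\leq n^nb^n$, i.e. $2\leq n^{(n-1)/2}$, which holds precisely because $n\geq 3$ (already $n^{(n-1)/2}\geq 3>2$). This yields $|z_i|\leq n^nb^n=(nb)^n$ and hence the claim (and when $z_i=0$ the statement is vacuous). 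The only mildly delicate point is confirming that the stray $+1$ and the $\sqrt{n}$ factor are comfortably swallowed by the gap between $n^{n/2}$ and $n^n$; this is exactly where $n\geq 3$ is used, and everything else is a direct substitution.
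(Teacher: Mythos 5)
Your proof is correct and follows essentially the same route as the paper's: both chain Lemma~\ref{lemma:bound-for-z} with Lemma~\ref{lemma:condition-number}, use $|\det(A)|\geq 1$ and $\|A\|_F\leq nb$ to get $\kappa(A)\leq 2n^{n/2}b^n$, and then absorb the leftover constants using $n\geq 3$. The only cosmetic differences are that you handle the stray $+1$ via $b\geq 1$ rather than via $\kappa(A)\geq 1$ (arriving at the slightly sharper constant $2$ instead of $3$) and that you verify the final inequality multiplicatively rather than after taking logarithms.
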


\begin{proof}
By Lemma \ref{lemma:condition-number} and the bound $||A||_{F} \leq n b$ we have $\kappa(A) \leq 2n^{n/2}b^n$. By Lemma \ref{lemma:bound-for-z} we have $||z||\leq \frac{3}{2}\cdot \kappa(A)\cdot \sqrt{n}\leq 3 n^{\frac{n+1}{2}}b^n$. Then, $\log(|z_i|)\leq \log(||z||)\leq \log(3)+n\log(b)+\left(\frac{n+1}{2}\right)\cdot \log(n)\leq n(\log(n)+\log(b))$.  
\end{proof}

Let $A\in \mathcal{M}_{n\times n}(\Z)$ and $x\in \Z^n$. We denote the augmented matrix of $A$ by $\widetilde{A}=(A|x)$. Next, we introduce the Q-matrix of $\widetilde{A}$ which play an important role to reduce the CVP to a QUBO problem.

\begin{definition}\label{Def:Qmatrix}
Let $A=(a_{ij})\in \mathcal{M}_{n}(\Z)$, $x\in \mathbb{Z}^n$ and $\widetilde{A}=(A|x)$. Let $b=\max \{|a_{ij}|: 1\leq i,j\leq n\}$ and $m=\lceil n (\log(n)+ \log(b)) \rceil$. The Q-matrix of $\widetilde{A}$ is the matrix $Q^{\widetilde{A}}=(q_{hr})\in \mathcal{M}_{n(m+1)}(\Z)$ given by:
\begin{equation*}
q_{hr}:= \left\{ \begin{array}{ll}
2^{j+l} \sum_{t=1}^n a_{ti} a_{tk} & \textrm{if } h\neq r; \\
4^l \sum_{t=1}^n a_{ti}^2 - 2^{m+j+1} \sum_{t=1}^n \sum_{s=1}^n a_{st}a_{si} + 2^{j+1}\sum_{t=1}^n x_t a_{ti} & \textrm{if } h=r,
\end{array}   \right.
\end{equation*}
where $h=jn+i$, $r=ln+k$ for $0\leq j \leq m$, $1\leq i \leq n$, $0\leq l \leq m$ and $1\leq k \leq n$.
\end{definition}

Note that every positive integer $h\leq n(m+1)$ can be written univocally as $h=jn+i$ with $0\leq j \leq m$ and $1\leq i \leq n$, then the matrix $Q^{\widetilde{A}}$ is well defined.\\

Let $A,n$ and $b$ as in the Proposition \ref{prop:zi-bound}. We denote by $\mathcal{B}_{n\times m}$ the space of $n\times m$ binary matrices (i.e. each coefficient of the matrix is $0$ or $1$), $m := \lceil n (\log(n)+\log(b)) \rceil$, $u := (2^m,2^m,\ldots, 2^m)\in \Z^n$ and $p :=(1,2,\ldots,2^m)\in \Z^{m+1}$. We also consider a fixed vector $x\in \mathcal{P}_{A}$. If $z\in \Z^n$ satisfies $||Az-x||=d(\Lambda, x)$, by Proposition \ref{prop:zi-bound} we have $|z_i|\leq 2^m$ for $1\leq i \leq n$. Then, there is a binary matrix $B \in \mathcal{B}_{n\times (m+1)}$ such that $z+u = Bp$. The CVP is the problem of finding a vector $z\in \Z^n$ that minimizes $||Az-x||^2 = (Az-x)^t (Az-x)$. After a straightforward calculation substituting $z$ by $Bp-u$ and eliminating constant terms, we can see that CVP is also equivalent to the problem of finding a binary matrix $B \in \mathcal{B}_{n\times (m+1)}$ that minimizes the function: 
\begin{equation}\label{eq:binary-matrix}
q(B):= (ABp)^t ABp - 2 u^tA^tABp +2 x^tA Bp
\end{equation}
From this, it is easy to see that $q(B)$ is a quadratic polynomial in the entries of the matrix $B$, that is $q(B)= \sum c_{i,j,k,l}\cdot b_{i,j}b_{k,l} -2 \sum d_{i,j}\cdot b_{i,j}+2 \sum e_{i,j}\cdot b_{i,j}$ for some integer coefficients $c_{i,j,k,l}$, $d_{i,j}$ and $e_{i,j}$ with $1\leq i\leq n, 0\leq j \leq m, 1\leq k \leq n$ and $0\leq l \leq m$. To put it as in the standard qubo model we use the relation $b_{i,j} = b_{i,j}^2$ and rewrite the expression as $q(B)=\sum' c_{i,j,k,l}\cdot b_{i,j}b_{k,l} + \sum (c_{i,j,i,j}-2 d_{i,j} + 2 e_{i,j})\cdot b_{i,j}^2$ where $\sum'$ indicates that the sum is restricted to the $i,j,k,l$ such that $(i,j)\neq (k,l)$. In this way $q(B)= v_{B}^t Q v_{B}$ where $v_{B}$ is a binary (column) vector of length $n(m+1)$ whose $(jn+i)$-th entry is $b_{i,j}$ for $1\leq i \leq n$ and $0\leq j \leq m$ and $Q$ is a $n(m+1)\times n(m+1)$ matrix with integer coefficients $c_{i,j,k,l}$ outside the main diagonal and $c_{i,j,i,j}-2d_{i,j}+2e_{i,j}$ in the main diagonal. 

We can use Equation (\ref{eq:binary-matrix}) to calculate these coefficients and we will obtain the same values of those in Definition \ref{Def:Qmatrix}. All the above can be summarized in the following result.

\begin{theorem}\label{thm:main}
Let $A \in \mathcal{M}_{n}(\Z)$ be a non-singular matrix with $n\geq 3$ and $\Lambda_{A}$ be the lattice generated by the columns of $A$. Let $x\in \Z^n$ and $\widetilde{A}=(A|x)$. Assume that $x$ is in the fundamental parallelepiped of $\Lambda$. Let $b=\max \{|a_{ij}|: 1\leq i,j\leq n\}$, $m=\lceil n (\log(n)+ \log(b)) \rceil$, $u = (2^m,2^m,\ldots, 2^m)\in \Z^n$, $p =(1,2,\ldots,2^m)\in \Z^{m+1}$ and $\widetilde{Q}$ be the Q-matrix of $\widetilde{A}$. Consider a solution $\widetilde{x} \in \{0,1\}^{n(m+1)}$ of the QUBO problem regarding the quadratic form $x^t Q x$ (i.e $\widetilde{x}^t \widetilde{Q} \widetilde{x} \leq x^t \widetilde{Q} x$, for all $x \in \{0,1\}^{n(m+1)}$). Let $B \in \mathcal{B}_{n\times (m+1)}$ be the binary matrix with entries $B_{ij}= \widehat{x}_{jn+i}$ for $1\leq i \leq n$ and $0\leq j \leq m$. If $\widetilde{\lambda}=A(Bp-u)$ then $\widetilde{\lambda} \in \Lambda_{A}$ and $||x-\widetilde{\lambda}||\leq ||x-\lambda||$ for all $\lambda \in \Lambda_{A}$. 
\end{theorem}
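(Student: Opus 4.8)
The plan is to prove the statement by combining two ingredients that are essentially independent of each other: an \emph{algebraic identity} expressing the QUBO objective $v_B^{t}Q^{\widetilde A}v_B$ as the squared distance $\|x-A(Bp-u)\|^{2}$ up to an additive constant that does not depend on $B$, and the \emph{a priori coordinate bound} of Proposition~\ref{prop:zi-bound}, which guarantees that the true closest lattice vector is among those of the form $A(Bp-u)$ with $B$ binary. Granting these, the conclusion follows from a short sandwiching argument against the binary encoding of the genuine optimum.

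First I would establish the identity $q(B)=v_B^{t}Q^{\widetilde A}v_B$ for every $B\in\mathcal B_{n\times(m+1)}$, where $q$ is the quadratic function of (\ref{eq:binary-matrix}) and $v_B\in\{0,1\}^{n(m+1)}$ is the vectorization with $(v_B)_{jn+i}=b_{i,j}$. Expanding the square $\|A(Bp-u)-x\|^{2}$ and discarding every summand that does not involve $B$ (the terms built only from $u$ and $x$, together with $x^{t}x$) leaves exactly $q(B)$ as displayed in (\ref{eq:binary-matrix}). Now substitute $(Bp)_i=\sum_{j=0}^{m}2^{j}b_{i,j}$, so that $(ABp)_t=\sum_{i,j}a_{ti}2^{j}b_{i,j}$, multiply out, and collect the coefficient of each monomial: for $(i,j)\neq(k,l)$ the monomial $b_{i,j}b_{k,l}$ acquires the coefficient $2^{j+l}\sum_{t}a_{ti}a_{tk}$, the purely quadratic diagonal part contributes $4^{j}\sum_{t}a_{ti}^{2}$, and the linear parts, rewritten through $b_{i,j}=b_{i,j}^{2}$, are folded onto the diagonal. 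Under the bijection $h=jn+i$, $r=ln+k$ (with $0\le j,l\le m$ and $1\le i,k\le n$) these coefficients are precisely the entries $q_{hr}$ of Definition~\ref{Def:Qmatrix}, which is the desired identity. This coefficient matching — in particular keeping the symmetrization convention and the diagonal absorption consistent — is the step that demands the most care, though it involves no idea beyond a careful expansion.

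Next I would use the coordinate bound. Since $x\in\Par_{A}$, the closest lattice vector $\lambda^{\ast}=Az^{\ast}$ to $x$ is $A$-feasible, so by Proposition~\ref{prop:zi-bound} and the choice $m=\lceil n(\log n+\log b)\rceil$ we have $|z^{\ast}_{i}|\le 2^{m}$ for all $i$; hence $z^{\ast}+u$ is a vector of nonnegative integers and can be written $z^{\ast}+u=B^{\ast}p$ for some $B^{\ast}\in\mathcal B_{n\times(m+1)}$. Let $c$ denote the $B$-independent constant discarded above; by the identity of the previous paragraph, $\|x-A(Bp-u)\|^{2}=v_B^{t}Q^{\widetilde A}v_B+c$ for every binary $B$, and in particular $v_{B^{\ast}}^{t}Q^{\widetilde A}v_{B^{\ast}}+c=\|x-\lambda^{\ast}\|^{2}=d(\Lambda,x)^{2}$.

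Finally I would close the argument. Let $\widetilde x$ be a QUBO optimum, $B$ the corresponding reshaping, and $\widetilde\lambda=A(Bp-u)$; then $\widetilde\lambda\in\Lambda$. Optimality of $\widetilde x$ gives $v_B^{t}Q^{\widetilde A}v_B\le v_{B^{\ast}}^{t}Q^{\widetilde A}v_{B^{\ast}}$, hence $\|x-\widetilde\lambda\|^{2}=v_B^{t}Q^{\widetilde A}v_B+c\le v_{B^{\ast}}^{t}Q^{\widetilde A}v_{B^{\ast}}+c=d(\Lambda,x)^{2}$, while $\widetilde\lambda\in\Lambda$ forces $\|x-\widetilde\lambda\|\ge d(\Lambda,x)$. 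Therefore $\|x-\widetilde\lambda\|=d(\Lambda,x)\le\|x-\lambda\|$ for all $\lambda\in\Lambda_{A}$, which is exactly the claim. The only genuine obstacle is the bookkeeping in the coefficient-matching step; everything else is the sandwiching argument together with a direct appeal to Lemmas~\ref{lemma:condition-number} and \ref{lemma:bound-for-z} through Proposition~\ref{prop:zi-bound}.
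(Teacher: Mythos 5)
Your proposal is correct and follows essentially the same route as the paper: the coefficient-matching identity $q(B)=v_B^{t}Q^{\widetilde A}v_B$ from Equation~(\ref{eq:binary-matrix}) and Definition~\ref{Def:Qmatrix}, the encoding $z+u=Bp$ justified by Proposition~\ref{prop:zi-bound}, and the resulting equivalence of the two minimization problems. The paper presents this as a running discussion before the theorem statement rather than a formal proof, and your explicit sandwiching step is just that argument spelled out.
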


\begin{remark}\label{Remark:main}
The theorem above shows that the CVP can be reduced to a QUBO problem of minimizing a quadratic form $x^t \widetilde{Q} x$ where $x$ is a vector of binary variables. The number of binary variables is $n(m+1)=O(n^2(\log(n)+\log(b)))$. The length (in bits) of the coefficients in $\widetilde{Q}$ can be bounded using the expression in Definition \ref{Def:Qmatrix} and the fact $|x_i|\leq nb$ (since $x\in \mathcal{P}_{A}$). Namely, $|q_{hr}|\leq 4^m n b^2 + 2^{2m+1}n^2 b^2 + 2^{m+1}nb^2 \leq 3\cdot 2^{2m+1}n^2b^2$. Then $\log|q_{hr}|\leq \log(3)+2m+1+2(\log(n)+\log(b))=O(n(\log(n)+\log(b)))$.
\end{remark}

\section{Further remarks}

In this paper we explicitly show how to reduce the CVP to a QUBO problem, obtaining an upper bound for the number of binary variables involved and also for the length (in bits) of the coefficients of the quadratic form $\widetilde{Q}$. Despite the fact that QUBO problems are difficult to solve in general (they are NP-hard optimization problems \cite{GKHD22i}), there are particular instances where it can be solved in a reasonable amount of time using adiabatic quantum optimization \cite{Lucas14}. The efficiency of solving a QUBO problem in a quantum annealer device depends strongly on the way the data of the matrix $\widetilde{Q}$ is processed. For example, in a D-wave computer the network topology is modeled by a host graph (the Chimera graph for a D-Wave's 2000 qubit and the Pegasus graph for a D-Wave's 5000 qubit) and the data of the matrix $\widetilde{Q}$ is given by their associated connectivity graph. A crucial point is the design of efficient embedding algorithms that are able to embed a large class of connectivity graphs (associated with QUBO problems) into the host graph of a quantum annealer device, see for instance \cite{ZBDE20}. Finding a minor-embedding is NP-complete in general, but there are efficient algorithms for some classes of graph. A continuation of this work includes to identify some particular classes of instances for the CVP such that the corresponding QUBO problem is suitable to be solved by a D-Wave computer (this will ultimately depend on the properties of the connectivity graph associated with the Q-matrix). It is important to remark that when a QUBO problem is solved via quantum annealing in general we obtain an approximate solution of the problem. Theorem \ref{thm:main} associates every exact solution $\widehat{x}$ of a QUBO problem with an exact solution $\widetilde{\lambda}$ of the corresponding CVP. It would be interesting to estimate how far we are from an optimal solution of the CVP if we start with an approximate solution $\widehat{x}$ instead of an exact solution.

\section*{Acknowledgement}
Research of the authors was supported in part by CyTeD (``Programa Iberoamericano de Ciencia y Tecnolog{\'\i}a para el Desarrollo'') project 522RT0131.

\end{document}